\newcommand{\hyltl}{\textsf{HyLTL}\xspace}
\newcommand{\ltl}{\textsf{LTL}\xspace}
\newcommand{\X}{\ensuremath{\mathbin{\mathbf{X}}}\xspace}
\newcommand{\U}{\ensuremath{\mathbin{\mathbf{U}}}\xspace}
\newcommand{\R}{\ensuremath{\mathbin{\mathbf{R}}}\xspace}
\newcommand{\F}{\ensuremath{\mathbin{\mathbf{F}}}\xspace}
\newcommand{\G}{\ensuremath{\mathbin{\mathbf{G}}}\xspace}
\newcommand{\fcs}{\ensuremath{CS}\xspace}
\newcommand{\mmodels}{\Vdash}
\newcommand{\cmodels}{\vdash}
\DeclareMathOperator{\cl}{cl}
\DeclareMathOperator{\cs}{mc}
\newcommand{\on}{\text{\it on}}
\newcommand{\off}{\text{\it off}}
\newcommand{\op}{\ensuremath{OP}\xspace}
\newtheorem{definition}{Definition}
\newtheorem{theorem}{Theorem}
\newcommand{\dims}[1]{k}
\DeclareMathOperator{\dom}{\mathrm{dom}}
\newcommand{\bbR}{\mathbb{R}}
\newcommand{\cvF}{\mathcal{F}}
\newcommand{\cvV}{\mathcal{V}}
\newcommand{\autH}{\mathcal{H}}
\newcommand{\Loc}{\mathrm{Loc}}
\newcommand{\Edg}{\mathrm{Edg}}
\DeclareMathOperator{\Dyn}{{Dyn}}
\DeclareMathOperator{\Res}{{Res}}
\DeclareMathOperator{\Init}{{Init}}
\DeclareMathOperator{\Val}{{Val}}
\newcommand{\fstate}{\mathit{fstate}}
\newcommand{\lstate}{\mathit{lstate}}
\newcommand{\ltime}{\mathit{ltime}}
\newcommand{\bx}{\mathbf{x}}
\newcommand{\by}{\mathbf{y}}
\newcommand{\bz}{\mathbf{z}}
\newcommand{\trans}[1]{\xrightarrow{#1}}
\DeclareMathOperator{\trajs}{\mathit{Trajs}}
\newcommand{\rest}{\mathop{\downarrow}}
\newcommand{\valunion}{\mathrel{\sqcup}}
\newcommand{\cut}[1]{ }
\renewcommand{\quote}[1]{\emph{``#1''}}
\title{\hyltl: a temporal logic for model checking hybrid systems}
\author{Davide Bresolin
\institute{Universit\`{a} degli Studi di Verona \\ Dipartimento di Informatica \\ Verona, Italy}
\email{davide.bresolin@univr.it}
}
\begin{document}
\maketitle

\begin{abstract}
The model-checking problem for hybrid systems is a well known challenge in the scientific community. Most of the existing approaches and tools are limited to safety properties only, or operates by transforming the hybrid system to be verified into a discrete one, thus loosing information on the continuous dynamics of the system.
In this paper we present a logic for specifying complex properties of hybrid systems called \hyltl, and we show how it is possible to solve the model checking problem by translating the formula into an equivalent hybrid automaton. In this way the problem is reduced to a reachability problem on hybrid automata that can be solved by using existing tools.
\end{abstract}

\section{Introduction}

\emph{Hybrid systems} exhibit both a discrete and a continuous behaviour with a tight interaction between the two. Typical examples include discrete controllers that operate in a continuous environment, like automotive power-train systems, where a four stroke engine is modeled by a switching continuous system and is controlled by a digital controller.
In order to model and specify hybrid systems in a formal way, the notion of \emph{hybrid automata} has been introduced~\cite{Alur,Hen00a,maler91from}. Intuitively, a hybrid automaton is a ``finite-state automaton'' with continuous variables that evolve according to dynamics characterizing each discrete state (called a {\em location} or {\em mode}). 
Of particular importance in the analysis of hybrid automata is the \emph{model checking problem}, that is, the problem of verifying whether a given hybrid automaton respects some property of interest. 
The state of a hybrid automaton consists of the pairing of a discrete location with a vector of continuous variables, therefore it has the cardinality of continuum. 
This makes the model checking problem computationally difficult. Indeed, even for very simple properties and systems, this problem is not decidable~\cite{henzinger98whats}.

Many different approaches have been used in the literature to tackle this problems.
For simple classes of hybrid systems, like timed automata, the model checking problem can be solved exactly~\cite{timed_automata}, and tools like Kronos \cite{kronos97} and \textsc{UPPAAL}~\cite{uppaal} can be used to verify CTL properties of timed automata. For more complex classes of systems, the problem became undecidable. Nevertheless, many different approximation techniques may be used to obtain an answer, at least in some cases.
Tools like PhaVer \cite{Frehse2008} and SpaceEx~\cite{Frehse2011}  can compute approximations of the reachable set of hybrid automata with linear dynamics, and thus can be used to verify safety properties.
Other tools, like HSOLVER~\cite{hsolver}, and Ariadne~\cite{ijrnc2012}, can manage also systems with nonlinear dynamics, but are still limited to safety properties only.

We are aware of only very few approaches that can specify and verify complex properties of hybrid systems in a systematic way. A first attempt was made in~\cite{Lamport93}, where an extension of the Temporal Logic of Actions called TLA+ is used to specify and implement the well-known gas burner example. Later on, Signal Temporal Logic (STL), an extension of the well-known Metric Interval Logic to hybrid traces, has been introduced to monitor hybrid and continuous systems~\cite{Maler2004}. More recent approaches include the tool KeYmaera~\cite{Platzer2008}, that uses automated theorem proving techniques to verify nonlinear hybrid systems symbolically, and the logic HRELTL~\cite{Cimatti09}, that is supported by an extension of the discrete model checker NuSMV, but it is limited to systems with linear dynamics. 

In this paper we present an alternative approach for model checking hybrid systems. We define a logic for specifying  properties called \hyltl, that extends the well-known \ltl logic to hybrid traces. Then, we show how it is possible to translate a formula of this logic into an equivalent hybrid automaton. In this way the model checking problem is reduced to a reachability problem on the composition of the automaton representing the system and the one representing the (negation of the) formula, and can be solved by using existing tools. An example of verification of a simple property using PhaVer is given to show the feasibility of the approach.

\subsection{Preliminaries}

Before formally defining hybrid automata and the syntax and semantics of \hyltl we need to introduce some basic terminology.
Throughout the paper we fix the \emph{time axis} to be the set of non-negative real numbers $\bbR^+$. An \emph{interval} $I$ is any convex subset of $\bbR^+$, usually denoted as $[t_1, t_2] = \{t \in \bbR^+ : t_1 \leq t \leq t_2\}$. 

We also fix a countable universal set $\cvV$ of \emph{variables}, ranging over the reals. Given a set of variables $X \subseteq \cvV$, a \emph{valuation} over $X$ is a function $\bx: X \mapsto \bbR$ that associates a value to every variable in $X$. The set $\Val(X)$ is the set of all valuations over $X$. Given a valuation $\bx$ and a subset of variables $Y \subseteq X$, we denote the \emph{restriction} of $\bx$ to $Y$ as $\bx\rest Y$. 
The restriction operator is extended to sets of valuations in the usual way. A valuation $\bx$ over $X$ and a valuation $\by$ over $Y$ \emph{agree} when they assign the same value to common variables, i.e., $\bx\rest X\cap Y = \by \rest X \cap Y$. When valuations $\bx$ over $X$ and $\by$ over $Y$ agree, we denote by $\bx\valunion\by$ the \emph{union} of $\bx$ and $\by$, defined as the valuation $\bz$ such that $\bz\rest X = \bx$ and $\bz\rest Y = \by$. Notice that valuations over disjoint sets of variables always agree, and thus their union is always defined.

\medskip

A notion that will play an important role in the paper is the one of \emph{trajectory}. A trajectory over a set of variables $X$ is a \emph{differentiable} function $\tau: I \mapsto \Val(X)$, where $I$ is a left-closed interval with left endpoint equal to $0$. Since $\tau$ is differentiable, its derivative is defined in every point of the domain but the endpoints: we denote with $\dot{\tau}$ the corresponding function giving the value of the derivative of $\tau$ for every point in the interior of $I$ (note that $\dot\tau$ might not be differentiable neither continuous).
With $\dom(\tau)$ we denote the domain of $\tau$, while with $\tau.\ltime$ (the \emph{limit time} of $\tau$) we define the supremum of $\dom(\tau)$. The \emph{first state} of a trajectory is $\tau.\fstate = \tau(0)$, while, when $\dom(\tau)$ is right-closed, the \emph{last state} of a trajectory is defined as $\tau.\lstate = \tau(\tau.\ltime)$. We denote with $\trajs(X)$ the set of all trajectories over $X$. Given a subset $Y \subseteq X$, the \emph{restriction} of $\tau$ to $Y$ is denoted as $\tau\rest Y$ and it is defined as the trajectory $\tau' : \dom(\tau) \mapsto \Val(Y)$ such that $\tau'(t) = \tau(t)\rest Y$ for every $t \in \dom(\tau)$. 


\medskip

Variables will be used in the paper also to build \emph{constraints}: conditions on the value of variables and on their derivative that can define sets of valuations, sets of trajectories, and jump relations. Formally, given a set of variables $X$, and a set of mathematical operators $\op$ (e.g. $+$, $-$, $\cdot$, exponentiation, $\sin$, $\cos$, \dots), we define the corresponded set of \emph{dotted variables} $\dot{X}$ as $\{\dot{x} | x \in X\}$ and the set of \emph{primed variables} $X'$ as $\{x' | x \in X\}$. We use $\op$, $X$, $\dot{X}$ and $X'$ to define the following two classes of constraints.

\begin{itemize}

	\item \emph{Jump constraints}: expressions built up from variables in $X \cup X'$, constants from $\bbR$, mathematical operators from $\op$ and the usual equality and inequality relations ($\leq$, $=$, $>$, \dots). Examples of jump constraints are $x' = 4 y + z$, $x^2 \leq y'$, $y' > \cos(y)$.

	\item \emph{Flow constraints}: expressions built up from variables in $X \cup \dot{X}$, constants from $\bbR$, mathematical operators from $\op$ and the usual equality and inequality relations ($\leq$, $=$, $>$, \dots). Examples of flow constraints are $\dot{x} = 4 y + z$, $x' + y \geq 0$, $\sin(x) > \cos(y')$.
\end{itemize}



We use jump constraints to give conditions on pairs of valuations $(\bx, \bx')$. Given a jump constraint $c$, we say that $(\bx, \bx')$ respects $c$, and we denote it with $(\bx, \bx') \cmodels c$, when, by replacing every variable $x$ with its value in $\bx$ and every primed variable $x'$ with the value of the unprimed variable in $\bx'$ we obtain a solution for $c$.

Flow constraints will be used to give conditions on trajectories. Given a flow constraint $c$, we say that a trajectory $\tau$ respects $c$, and we denote it with $\tau \cmodels c$, if and only if for every time instant $t \in \dom(\tau)$, both the value of the trajectory $\tau(t)$ and the value of its derivative $\dot{\tau}(t)$ respect $c$ (we assume that $\dot{\tau}(t)$ always respects $c$ for $t=0$ and $t=\tau.\ltime$).

\medskip

To conclude the preliminary section, we recall from the introduction that hybrid systems interleaves continuous and discrete evolution. For this reason their behaviors are usually defined in therms of \emph{hybrid traces}, mixing continuous trajectories with discrete events. Formally, given a (possibly finite) set of \emph{actions} (or discrete events) $A$ and a (possibly finite) set of variables $X$, an \emph{hybrid trace over $A$ and $X$} is any infinite sequence $\alpha = \tau_1 a_1 \tau_2 a_2 \tau_3 a_3 \ldots$ such that $\tau_i$ is a trajectory over $X$ and $a_i$ is an action in $A$ for every $i \geq 1$. 

Notice that this definition of hybrid traces allows an infinite sequence of discrete events to occur in a finite amount of time (Zeno behaviors). This will not be a problem for the semantics of the logic nor for the correctness of the model checking algorithm. Zeno behaviors are usually not desirable in real applications, but are very difficult to exclude completely from the language and the formal model of the system. For this reason we choose to include Zeno behaviors in the semantics of our logic at a first stage, leaving a more comprehensive study of this aspect as future work.

\section{Hybrid Automata}\label{sec:ha}

An hybrid automaton is a finite state machine enriched with continuous dynamics labelling each discrete state (or \emph{location}), that alternates continuous and discrete evolution. In continuous evolution, the discrete state does not change, while time passes and the evolution of the continuous state variables follows the dynamic law associated to the current location. A discrete evolution step consists of the activation of a \emph{discrete transition} that can change both the current location and the value of the state variables, in accordance with the reset function associated to the transition. 

The definition of hybrid automata given in this paper extends the one given in~\cite{Alur} to support the parallel composition of automata. Usually, compositional formalisms introduce either a distinction between internal variables and actions (that are hidden to the other automata) and external ones, that are shared between automata (possibly with a further partition between inputs and outputs)~\cite{Lynch03}. This is usually justified by the need to describe in a precise way how a component interacts with the environment and the other components of the system. However, in this paper we are more interested in the analysis of a single component on its own, and this distinction between ``local'' and ``global'' variables and actions is not strictly necessary. Hence, to simplify the definitions and the proofs we choose to consider all variables and actions as shared between the different automata. 

\begin{definition}\label{def:ha-syntax}
   A \emph{hybrid automaton} is a tuple $\autH=\langle\Loc,X,A,\Edg,\Dyn,\Res,\Init\rangle$ such that:
   
   \begin{compactenum}
   	\item $\Loc$ is a finite set of \emph{locations};
   	\item $X$ is a finite set of \emph{variables};
		\item $A$ is a finite set of \emph{actions};
		\item $\Edg \subseteq \Loc \times A \times \Loc$ is a set of \emph{discrete transitions};
		\item $\Dyn$ is a mapping that associates to every location $\ell \in \Loc$ a set of flow constraints $\Dyn(\ell)$ over $X \cup \dot{X}$ describing the \emph{dynamics} of $\ell$;
		\item $\Res$ is a mapping that associates every discrete transition $(\ell,e,\ell') \in \Edg$ with a set of jump constraints $\Res(\ell,e,\ell')$ over $X \cup X'$ describing the guard and reset function of the transition;
		\item $\Init \subseteq \Loc$ is a set of \emph{initial locations}.
\end{compactenum}
\end{definition}

Composition is defined as a binary operation on hybrid automata. 

\begin{definition}\label{def:ha-composition}
	Given two hybrid automata $\autH_1 = \langle\Loc_2,X_1,A_1,\Edg_1,\Dyn_1,\Res_1,\Init_1\rangle$ and \linebreak $\autH_2 = \langle\Loc_2,X_2,A_2,\Edg_2,\Dyn_2,\Res_2,\Init_2\rangle$, we define their composition $\autH_1 \| \autH_2$ as the hybrid automaton $\autH = \langle\Loc,X,A,\Edg,\Dyn,\Res,\Init\rangle$ such that:
\begin{compactenum}
	\item $\Loc = \Loc_1 \times \Loc_2$;

	\item $X = X_1 \cup X_2$;

	\item $A = A_1 \cup A_2$;

	\item $((\ell_1,\ell_2),a,(\ell_1',\ell_2')) \in \Edg$ iff
		\begin{inparaenum}[\it (i)]
			\item $a \in A_i$ and $(\ell_i, a, \ell_i') \in \Edg_i$, or
			\item $a \not\in A_i$ and $\ell_i = \ell_i'$,
		\end{inparaenum}
		for $i = 1,2$;

	\item 	for every $(\ell_1,\ell_2) \in \Loc$ we have that $\Dyn((\ell_1,\ell_2)) = \Dyn_1(\ell_1) \cup \Dyn_2(\ell_2)$;

	\item for every $((\ell_1,\ell_2),a,(\ell_1',\ell_2'))\in\Edg$, we have that $\Res((\ell_1,\ell_2),a,(\ell_1',\ell_2'))$ is 
		the minimal set of jump constraints such that, for $i = 1,2$:
		\begin{compactenum}[\it (i)]
			\item if $a \in A_i$ then $\Res_i(\ell_i,a,\ell_i')\subseteq\Res((\ell_1,\ell_2),a,(\ell_1',\ell_2'))$, and
			\item if $a \not\in A_i$ then $\{x = x' | x \in X_i \setminus X_{3-i}\}\subseteq\Res((\ell_1,\ell_2),a,(\ell_1',\ell_2'))$;
		\end{compactenum}
	\item $\Init = \Init_1 \times \Init_2$.
\end{compactenum}
\end{definition}

The \emph{state} of an hybrid automaton $\autH$ is a pair $(\ell,\bx)$, where $\ell\in \Loc$
is a location and $\bx \in \Val(X)$ is a valuation for the continuous variables.
A state $(\ell,\bx)$ is said to be \emph{admissible} if $(\ell,\bx) \cmodels \Dyn(\ell)$.
Transitions can be either \emph{continuous}, capturing the continuous evolution of the state, or 
\emph{discrete}, capturing instantaneous and discontinuous changes of the state.
Formally, they are defined as follows.

\begin{definition}\label{def:ha-semantics}
Let $\autH$ be a hybrid automaton.
The \emph{continuous transition relation} $\trans{\tau}$ between admissible states, 
where $\tau$ is a bounded trajectory over $X$, is
defined as follows:
\begin{equation}\label{eq:cont-trans}
(\ell,\bx) \trans{\tau} (\ell,\bx')  \iff 
  \tau.\fstate=\bx,\ \tau.\lstate=\bx', \text{ and } \tau \cmodels \Dyn(\ell).
\end{equation}
%
%
The \emph{discrete transition relation} $\trans{e}$ between admissible states, where
$a \in A$, is defined as follows:
\begin{equation}\label{eq:disc-trans}
(\ell,\bx) \trans{a} (\ell',\bx') \iff 
  (\ell,a,\ell') \in \Edg \wedge \bx \cmodels \Dyn(\ell) \wedge \bx' \cmodels \Dyn(\ell') \wedge (\bx,\bx') \cmodels \Res(\ell,a,\ell').
\end{equation}
\end{definition}

This semantics allows us to define the notion of hybrid traces \emph{generated} by the hybrid automaton $\autH$.

\begin{definition}\label{def:ha-trace}
Let $\autH$ be a hybrid automaton, and let $\alpha = \tau_1 a_1 \tau_2 a_2 \ldots$ be a (finite or infinite) hybrid trace over $X$ and $A$. We say that $\alpha$ is \emph{generated} by $\autH$ if there exists a corresponding sequence of locations $\ell_1 \ell_2 \ldots$ such that $\ell_1 \in \Init$ and, for every $i \geq 1$:
\begin{inparaenum}[(i)]
\item  $(\ell_i,\tau_i.\fstate) \trans{\tau_i} (\ell_i,\tau_i.\lstate)$, and
\item $(\ell_i,\tau_i.\lstate) \trans{a_i} (\ell_{i+1},\tau_{i+1}.\fstate)$.
\end{inparaenum}
\end{definition}

Figure~\ref{fig:thermostat} depicts an example of a simple hybrid automaton that models a thermostat. The variable $x$ represents the temperature of the room. In the location \emph{idle} the heater is off and temperature decreases according to the flow condition $\dot{x} = -0.2 x$. The automaton is allowed to stay in \emph{idle}  while the temperature is equal or greater to $17$, and can jump to \emph{heat} as soon as the temperature decrease under $19$ degrees. In location \emph{heat} the heater is turned on and the temperature increases according to the flow condition  $\dot{x} = 30 - 0.2x$. The automaton is allowed to stay in \emph{heat} until the temperature is less or equal to $23$, and can jump to \emph{idle} when it is greater than $21$. Initially, the heater is off.

\begin{figure}[htbp]
\begin{center}
	\begin{tikzpicture}[scale=1]
		 \begin{scope}[thick]
			\draw (0,0) node[draw,rectangle,line width=3pt](v1) {$\begin{array}{c}
					 \mathit{idle} \\ \dot{x} = -0.2 x \\ x \geq 17
				 \end{array}$};
			\draw (5,0) node[draw,rectangle](v2) {$\begin{array}{c}
					 \mathit{heat} \\ \dot{x} = 30 -0.2 x \\ x \leq 23
				 \end{array}$};
			\draw [->] (v1) .. controls +(+1.4,+1.4) and +(-1.4,1.4) .. (v2)
						node[midway,above] {$\mathit{on}\ x \leq 19\ x'=x $};		
			\draw [->] (v2) .. controls +(-1.4,-1.4) and +(1.4,-1.4) .. (v1)
					  node[midway,below] {$\mathit{off}\ x \geq 21\ x'=x $};
		 \end{scope}
	\end{tikzpicture}
\caption{The thermostat automaton $\autH_T$.}
\label{fig:thermostat}
\end{center}
\end{figure}

\section{\hyltl: syntax and semantics}\label{sec:hyltl}

The logic \hyltl is an extension of the well-known temporal logic \ltl to hybrid traces. Given a \emph{finite} set of actions $A$ and a \emph{finite} set of variables $X$, the language of \hyltl is defined from a set of \emph{flow constraints} \fcs over $X$ by the following grammar:
\begin{equation}\label{eq:grammar}
\begin{split}
\varphi ::= \	& c \in \fcs \mid a \in A \mid  \top \mid \bot \mid \neg \varphi \mid
						\varphi \land \varphi \mid \varphi \lor \varphi \mid
						\X \varphi \mid \varphi \U \varphi \mid \varphi \R \varphi 
\end{split}
\end{equation} 

\noindent In \hyltl flow constraints from \fcs and actions from $A$ take the role of propositional letters in standard temporal logics, $\top$ and $\bot$ are the logical constants \emph{true} and \emph{false}, $\neg$, $\land$ and $\lor$ are the usual boolean connectives, $\X$, $\U$ and $\R$ are hybrid counterpart of the standard \emph{next}, \emph{until} and \emph{release} temporal operators. 

\medskip

Let $\alpha = \tau_1 a_1 \tau_2 a_2 \ldots$ be an infinite hybrid trace. For every $i > 0$, the truth value of a \hyltl formula $\varphi$ over $\alpha$ at position $i$ is given by the truth relation $\mmodels$, formally defined as follows:

\begin{itemize}
	\item for every $c \in \fcs$, $\alpha, i \mmodels c$ iff $\tau_i \cmodels c$;
	\item for every $a \in A$, $\alpha, i \mmodels a$ iff $i > 1$ and $a_{i-1} = a$;
	\item $\alpha,i \mmodels \top$ and $\alpha,i \not\mmodels \bot$;
	\item $\alpha,i \mmodels \neg\varphi$ iff $\alpha,i\not\mmodels\varphi$;
	\item $\alpha,i \mmodels \varphi \land \psi$ iff $\alpha,i \mmodels \varphi$ and $\alpha,i \mmodels \psi$;
	\item $\alpha,i \mmodels \varphi \lor \psi$ iff $\alpha,i \mmodels \varphi$ or $\alpha,i \mmodels \psi$;
	\item $\alpha,i \mmodels \X\varphi$ iff $\alpha,i+1 \mmodels \varphi$;
	\item $\alpha,i \mmodels \varphi \U \psi$ iff there exists $j \geq i$ such that $\alpha,i\mmodels\psi$, and for every $i\leq k < j$, $\alpha,k\mmodels\varphi$;	
	\item $\alpha,i \mmodels \varphi \R \psi$ iff for all $j \geq i$, if for every $i \leq k < j$, $\alpha, k \not\mmodels \varphi$ then $\alpha,j \mmodels \psi$.
\end{itemize}

\noindent Intuitively, the operator $\X$ (``next discrete action'') requires that some formula hold after the next discrete action. The operator $\U$ forces a property to hold until some other property holds.

The next and until operator are sufficient to express other useful temporal operators, namely the ``always'' operator $\G$ and the ``eventually'' operator $\F$. They can be defined as usual:
\begin{eqnarray*}
	\F \varphi = \top \U \varphi & \qquad\qquad & \G \varphi = \neg \F \neg \varphi
\end{eqnarray*}

\noindent Using the base and derived operators of \hyltl it is possible to express a number of different properties, including safety properties, liveness properties, reactivity properties and so on. For instance, in the case of the Thermostat example described in Figure~\ref{fig:thermostat}, the usual property of \quote{keeping the temperature of the room between 15 and 25 degrees} is defined by the following formula:
\begin{equation}
	\varphi_{safe} = \G \left(x \geq 15 \land x \leq 25\right)
\end{equation}

\noindent Safety properties are the most common type of requirements on hybrid systems, but \hyltl can easily go beyond them. For example, it can express reactivity properties like \quote{whenever the heater is turned ON it is eventually turned OFF}:
\begin{equation}
	\varphi_{react} = \G \left( \on \rightarrow \F \off \right)
\end{equation}

\noindent Finally, the presence of discrete actions and of flow constraints in the syntax of \hyltl allow this logic to express properties involving both the discrete and the continuous behaviors of the system, thus capturing their hybrid nature. An example on the thermostat can be \quote{it is not possible that the heater turns on when the temperature is above 21 degrees}:
\begin{equation}
	\varphi_{hyb} = \neg \F \left( x \geq 21 \land \X \on \right)
\end{equation}

\section{From \hyltl to Hybrid Automata with B\"uchi conditions}

In this section we show how to translate a \hyltl formula into an equivalent hybrid automaton, that is, an hybrid automaton representing all traces that satisfy the formula. In this paper we use a declarative construction, similar to the tableau construction presented in~\cite{mcbook} for model checking of \ltl formulas. 

In analogy with the classical construction, we have to enrich the notion of hybrid automata with a suitable \emph{acceptance condition} to identify the traces generated by the automaton that fulfills the semantics of the until operator. To this end, we mark some of the locations of the hybrid automaton as \emph{final location} and we extend Definition~\ref{def:ha-syntax} with a \emph{Generalized B\"uchi accepting condition} for final locations.

\begin{definition}\label{def:ha-buechi}
 A \emph{Hybrid Automaton with Generalized B\"uchi condition (GBHA)} is a tuple $\autH=\langle\Loc,$ $X,A,\Edg,\Dyn,\Res,\Init,\cvF\rangle$ such that $\langle\Loc,X,A,\Edg,\Dyn,\Res,\Init\rangle$ is a Hybrid Automaton, and  \linebreak $\cvF = \{F_1, \ldots, F_n\} \subseteq 2^{\Loc}$ is a finite set of sets of final locations.
\end{definition}

We say that an hybrid trace $\alpha = \tau_1 a_1 \tau_2 a_2 \ldots$ is \emph{accepted} by a GBHA $\autH$ if there exists an \emph{infinite} sequence of locations $\ell_1 \ell_2 \ldots$ such that:
\begin{compactenum}[(i)]
\item $\ell_1 \in \Init$;
\item for every $i \geq 1$, $(\ell_i,\tau_i.\fstate) \trans{\tau_i} (\ell_i,\tau_i.\lstate)$; 
\item for every $i \geq 1$, $(\ell_i,\tau_i.\lstate) \trans{a_i} (\ell_{i+1},\tau_{i+1}.\fstate)$;
\item for every $F_j \in \cvF$ there exists $f_j \in F_j$ that occurs infinitely often in the sequence.
\end{compactenum}

\noindent Notice that $\alpha$ must be a sequence generated by $\autH$. However, not all sequences generated by the automaton are accepting: only those that respect the additional accepting condition are considered.

Given a \hyltl formula $\varphi$, we show now how to build a GBHA that accepts exactly all hybrid traces that satisfies $\varphi$. As in the case of \ltl , we assume that the formula is in negated normal form, and we define the notion of \emph{closure} of a \hyltl formula as follows.

\setlength{\multicolsep}{0pt}

\begin{definition}\label{def:closure}
Give an \hyltl formula $\varphi$ and a set of actions $A$ we define the \emph{closure of $\varphi$} as the smallest set $\cl(\varphi)$ of formulas satisfying the following conditions:
\begin{multicols}{2}
\begin{compactitem}
	\item $\varphi \in \cl(\varphi)$;
	\item for each $a \in A$, $a \in \cl(\varphi)$;
	\item if $\psi_1 \in \cl(\varphi)$ then $\neg{\psi_1} \in \cl(\varphi)$;
	\item if $\psi_1 \land \psi_2 \in \cl(\varphi)$ then $\psi_1, \psi_2 \in \cl(\varphi)$;
	\item if $\psi_1 \lor \psi_2 \in \cl(\varphi)$ then $\psi_1, \psi_2 \in \cl(\varphi)$;
	\item if $\X \psi_1 \in \cl(\varphi)$ then $\psi_1 \in \cl(\varphi)$;
	\item if $\psi_1 \U \psi_2 \in \cl(\varphi)$ then $\psi_1, \psi_2 \in \cl(\varphi)$;
	\item if $\psi_1 \R \psi_2 \in \cl(\varphi)$ then $\psi_1, \psi_2 \in \cl(\varphi)$;
\end{compactitem}
\end{multicols}
\end{definition}

\noindent The subsets of $\cl(\varphi)$ will label the locations of the hybrid automaton.  We aim to construct an automaton such that if a location is labelled by a subset $M \subseteq \cl(\varphi)$ then every accepting run starting from that location satisfies all formulas in $M$. For this reason, we can ignore every subset of the closure that is clearly inconsistent or subsumed by a consistent superset of formulas, and concentrate our attention to \emph{maximally consistent} subsets of $\cl(\varphi)$ defined as follows.

\begin{definition}\label{def:maximally-consistent}
A subset $M \subseteq \cl(\varphi)$ is \emph{maximally consistent} if it respects the following conditions:
\begin{multicols}{2}
\begin{compactenum}
	\item $\top \in M$;
	\item $\psi_1 \in M$ iff $\neg{\psi_1} \not\in M$;
	\item $\psi_1 \land \psi_2 \in M$ iff $\psi_1 \in M$ and $\psi_2 \in M$;
	\item $\psi_1 \lor \psi_2 \in M$ iff $\psi_1 \in M$ or $\psi_2 \in M$;
	\item if  $a \in M$ then for each $b \neq a$, $b \not\in M$.
\end{compactenum}
\end{multicols}
\end{definition}

\noindent The first four conditions are the usual boolean consistency conditions, while conditions 5 guarantees that actions in $A$ are mutually exclusive.
Let $\cs(\varphi)$ be the set of maximally consistent subsets of $\cl(\varphi)$. We are going to use $\cs(\varphi)$ as the location of the GBHA representing $\varphi$.

\begin{definition}\label{def:gbha-varphi}
Let $\varphi$ be a \hyltl formula over a set of variables $X$ and actions $A$. The corresponding GBHA $\autH_\varphi = \langle\Loc,X,A,\Edg,\Dyn,\Res,\Init,\cvF\rangle$ is defined as follows:
\begin{compactitem}
	\item $\Loc = \cs(\varphi)$;
	\item for each $M, M' \in \cs(\varphi)$ and $a \in A$, $(M,a,M') \in \Edg$ if and only if:
		\begin{compactitem}
			\item $a \in M'$;
			\item $\X \psi_1 \in M$ iff $\psi_1 \in M'$;
			\item $\psi_1 \U \psi_2 \in M$ iff $\psi_2 \in M$ or ($\psi_1 \in M$ and $\psi_1 \U \psi_2 \in M'$);
			\item $\psi_1 \R \psi_2 \in M$ iff $\psi_1,\psi_2 \in M$ or ($\psi_2\in M$ and $\psi_1 \R \psi_2 \in M'$);
		\end{compactitem}
	\item for each $M \in \cs(\varphi)$, $\Dyn(M) = \{c \in \fcs | c \in M\}$ (i.e., the set of flow constraints of $M$);
	\item for each $(M, a, M') \in \Edg$, $\Res(M, a, M') = \top$;
	\item $\Init = \{M \in \cs(\varphi) | \varphi \in M$ and $M \cap A = \emptyset\}$;
	\item for each $\psi_1 \U \psi_2 \in \cl(\varphi)$, $\{M \in \cs(\varphi) | \psi_2 \in M$ or $\neg(\psi_1 \U \psi_2) \in M\} \in \cvF$.
\end{compactitem}
\end{definition}

The conditions on the set of discrete transitions $\Edg$ guarantee that any run of the automaton does not violate the semantics of the temporal operators. The accepting condition $\cvF$ is a set of sets of locations, where every set in $\cvF$ corresponds to an until-formula in $\cl(\varphi)$. They guarantee that whenever a formula $\psi_1 \U \psi_2$ become true in some location, then the promise $\psi_2$ is eventually fulfilled later on in the run.
The following theorem proves that $\autH_\varphi$ accepts exactly all the hybrid traces satisfying $\varphi$.

\begin{theorem}\label{teo:main}
Let $\varphi$ be \hyltl-formula, and $\alpha = \tau_1 a_1 \tau_2 a_2 \ldots$ be an infinite hybrid trace over $X$ and $A$. Then $\alpha,1 \mmodels \varphi$ if and only if $\autH_\varphi$ accepts $\alpha$.
\end{theorem}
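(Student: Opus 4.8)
The plan is to prove the two implications separately, using a different technique for each, with the common thread being a \emph{labelling invariant}: a location $M \in \cs(\varphi)$ occurring at position $i$ of a run should ``describe'' the suffix of the trace, in the sense that $\psi \in M$ iff $\alpha, i \mmodels \psi$ for every $\psi \in \cl(\varphi)$. Since $\varphi$ is in negated normal form, the only negations I ever have to evaluate sit in front of flow constraints or actions, so for each half of the theorem it suffices to establish a \emph{single} implication of this invariant; this sidesteps the usual circularity between the two halves of the biconditional.

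First I would handle $\alpha, 1 \mmodels \varphi \Rightarrow \autH_\varphi$ accepts $\alpha$ by exhibiting the \emph{canonical run} $M_i \defeq \{\psi \in \cl(\varphi) : \alpha, i \mmodels \psi\}$. I check each $M_i \in \cs(\varphi)$: the boolean clauses of Definition~\ref{def:maximally-consistent} are immediate from the semantics of the connectives, and clause~5 holds because $\alpha, i \mmodels a$ forces $a = a_{i-1}$, and $a_{i-1}$ is a single action. Then I verify that $M_1 M_2 \ldots$ meets the acceptance definition: $M_1 \in \Init$ because $\varphi \in M_1$ and no action holds at position~$1$; the continuous step $(M_i, \tau_i.\fstate) \trans{\tau_i} (M_i, \tau_i.\lstate)$ holds because $\Dyn(M_i) = \{c \in \fcs : \tau_i \cmodels c\}$ by construction; the discrete step holds because every edge clause of Definition~\ref{def:gbha-varphi} is exactly a semantic identity (presence of $a_i$ in $M_{i+1}$, the next-operator shift, and the fixpoint expansions of until and release), while $\Res \equiv \top$ makes resets vacuous. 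Finally the generalized B\"uchi condition holds: for the set attached to $\psi_1 \U \psi_2$, were it visited only finitely often then from some point on $\alpha, i \mmodels \psi_1 \U \psi_2$ while $\alpha, i \not\mmodels \psi_2$, contradicting the semantics of until at the first such position.

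For the converse I am given an accepting sequence $M_1 M_2 \ldots$ and prove the single implication $\psi \in M_i \Rightarrow \alpha, i \mmodels \psi$ for all $i$ and all $\psi \in \cl(\varphi)$ by structural induction on $\psi$; instantiating at $\varphi \in M_1$ gives $\alpha, 1 \mmodels \varphi$. The literal cases use that $\Dyn(M_i)$ forces $\tau_i$ to respect every flow constraint in $M_i$, and that the incoming edge forces $a_{i-1} \in M_i$ together with mutual exclusion of actions; the boolean and next cases are direct from maximal consistency and the edge clauses; for release, an auxiliary downward argument shows that as long as $\psi_1$ is absent the expansion keeps both $\psi_2$ and $\psi_1 \R \psi_2$ in the current location, so $\psi_2$ holds wherever the semantics demands it, and no acceptance condition is needed here.

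The crux, and the one place where the acceptance condition is indispensable, is the until case. From $\psi_1 \U \psi_2 \in M_i$ the expansion clause alone only tells me that, as long as $\psi_2$ is absent, $\psi_1$ and $\psi_1 \U \psi_2$ persist---which is perfectly consistent with $\psi_2$ never occurring and hence with the eventuality being broken. Here I invoke that the B\"uchi set $\{M : \psi_2 \in M \text{ or } \neg(\psi_1 \U \psi_2) \in M\}$ is met infinitely often: the first $M_j$ with $j \geq i$ lying in it must actually contain $\psi_2$, because $\psi_1 \U \psi_2$ (hence not its negation) sits in every intermediate location, and the induction hypothesis on $\psi_1$ and $\psi_2$ then yields $\alpha, i \mmodels \psi_1 \U \psi_2$. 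I expect the interaction between this local fixpoint expansion and the global fairness constraint to be the main obstacle; a secondary point to get right is the flow-constraint base case, where one must ensure that a negated literal $\neg c$ present in $M_i$ is itself enforced on $\tau_i$ (i.e.\ that $\fcs$, read up to negated (in)equalities, is what $\Dyn$ ranges over), so that the dynamics faithfully reflect the atomic literals.
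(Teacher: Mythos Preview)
Your proposal is correct and follows essentially the same route as the paper: the canonical run $M_i = \{\psi \in \cl(\varphi) : \alpha, i \mmodels \psi\}$ for the forward direction, and structural induction on $\psi$ for the converse, with the until case resolved via the generalized B\"uchi set exactly as you outline. The paper phrases the converse invariant as a full biconditional $\psi \in M_i \Leftrightarrow \alpha,i \mmodels \psi$, but in the atomic cases it, like you, actually establishes only the implication $\psi \in M_i \Rightarrow \alpha,i \mmodels \psi$; your NNF-based justification for why that single direction suffices, and your flag that $\Dyn$ must be read as enforcing negated constraints for the $\neg c$ base case to go through, are both on point.
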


\begin{proof}
Suppose that $\alpha, 1 \mmodels \varphi$. For every $i \geq 1$, let $M_i = \{\psi\in \cl(\varphi) | \alpha,i \mmodels \psi\}$. It is easy to see that every $M_i$ is a maximal consistent set of formulas. Moreover, by the definition of $\autH_\varphi$, we have that $M_1 \in \Init$ and that, for every $i \geq 1$, $\tau_i \cmodels \Dyn(M_i)$,
$(M_i, a_i, M_{i+1}) \in \Edg$, and $(\tau_i.\lstate,\tau_{i+1}.\fstate) \cmodels \Res(M_i, a_i, M_{i+1})$.

Hence, $\alpha$ is generated by $\autH_\varphi$. To show that $\alpha$ is also accepted by the automaton, let $\psi_1 \U \psi_2$ be an until formula in $\cl(\varphi)$. Two cases may arise: either there exists some index $j$ such that for all $k \geq j$, $\alpha,k\not\mmodels \psi_1 \U \psi_2$; or there exists an infinite set of indexes $j_1, j_2, \ldots$ such that $\alpha,{j_k}\mmodels \psi_1 \U \psi_2$. In the former case, all maximal consistent sets $M_k$ are such that $\neg(\psi_1 \U \psi_2) \in M_k$. In the latter case, by the semantics of \hyltl, for each $j_k$ there must exists an index $h_k$ such that $\alpha,{h_k}\mmodels \psi_2$: this implies that $\psi_2\in M_{h_k}$. In both cases there exists a final location $f \in \{M \in \cs(\varphi) | \psi_2 \in M$ or $\neg(\psi_1 \U \psi_2) \in M\}$ that occurs infinitely often in the run of the automaton. This proves that $\alpha$ is accepted by $\autH$.

\medskip

Conversely, suppose that $\alpha$ is accepted by $\autH_\varphi$, and let $M_1 M_2 \ldots$ be an accepting sequence of locations for $\alpha$. We prove the claim by showing that the following stronger property holds:
\begin{equation}\label{eq:strongerprop}
\text{\it for every $i \geq 1$ and for every $\psi\in\cl(\varphi)$, $\psi \in M_i$ if and only if $\alpha,i \mmodels \psi$.}
\end{equation}
 
\noindent We proceed by induction on the structure of $\psi$.

\begin{itemize}
	\item $\varphi = c$, with $c = \top$ or $c \in \fcs$. By the definition of $\autH_\varphi$, $c \in \Dyn(M_i)$ and thus, by the definition of the continuous transition relation, $\tau_i \cmodels c$.
	
	\item $\varphi = a$, for some $a \in A$. By the definition of $\autH_\varphi$, $M_i \not\in \Init$: this implies that $i > 1$. Consider now the action $a_{i-1}$ in $\alpha$: by the definition of $\autH_\varphi$, we have that $a_{i-1} = a$ and thus $\alpha,i\mmodels a$.
	
	\item $\varphi = \psi_1 \land \psi_2$. By the definition of consistent set, $\psi_1 \land \psi_2 \in M_i$ if and only if $\psi_1 \in M_i$ and $\psi_2 \in M_i$. By the induction hypothesis, we have that $\alpha,i \mmodels \psi_1$ and $\alpha,i \mmodels \psi_2$. By the semantics of \hyltl, we have that $\alpha,i \mmodels \psi_1 \land \psi_2$.
	
	\item $\varphi = \psi_1 \lor \psi_2$ or $\varphi = \neg\psi_1$. This case is similar to the previous one and thus skipped.
	
	\item $\varphi = \X\psi_1$. By the definition of $\autH_\varphi$, since $(M_i, a_i, M_{i+1}) \in \Edg$ we have that $\psi_1\in M_{i+1}$. By inductive hypothesis, we have that $\alpha,{i+1} \mmodels \psi_1$ and thus that $\alpha,i \mmodels \X\psi_1$.

		
	\item $\varphi = \psi_1\U\psi_2$. Suppose that $\psi_1\U\psi_2 \in M_i$. Then, by the definition of $\autH_\varphi$, either $\psi_1 \in M_i$ or $\psi_2 \in M_i$. By inductive hypothesis, this implies that $\alpha,i \mmodels \psi_1$ or $\alpha,i \mmodels \psi_2$. In the latter case we can conclude that $\alpha,i\mmodels \psi_1\U\psi_2$.
	
	In the former case, by the definition of the discrete transitions $\Edg$, we have $\psi_1\U\psi_2 \in M_{i+1}$. By the definition of the acceptance condition of $\autH_\varphi$, we have that there must exists at least one index $j\geq i+1$ such that $\psi_2 \in M_j$. Let $k$ be the smallest of those indexes: by induction hypothesis we have that $\alpha,{k}\mmodels\psi_2$. We can also prove by induction on $m$ that, for every $i+1 \leq m < k$, $\psi_1, \psi_1\U\psi_2 \in  M_m$. When $m = i+1$, the proof is trivial. When $i+1 < m < k$ we have, by induction hypothesis, that $\psi_1\U\psi_2 \in M_{m-1}$. Since $\psi_2\not\in M_{m-1}$, by the definition of $\Edg$ we have that $\psi_1\U\psi_2\in M_m$ and thus that $\psi_1 \in M_m$. This implies that, by inductive hypothesis on $\varphi$, that $\alpha,m \mmodels \psi_1$, and thus that $\alpha,i \mmodels \psi_1\U\psi_2$.

	Now, suppose that $\alpha,i \mmodels \psi_1 \U \psi_2$. By the semantics of \hyltl, two cases may arise:
	\begin{compactitem}
		\item $\alpha,i \mmodels \psi_2$. By inductive hypothesis we have that $\psi_2 \in M_i$. By the definition of $\autH_\varphi$, this implies that $\psi_1\U\psi_2 \in M_i$.
		\item $\alpha,i \mmodels \psi_1$ and $\alpha,{i+1}\mmodels\psi_1\U\psi_2$. By the semantics of \hyltl, there must exists at least one index $j\geq i+1$ such that $\psi_2 \in M_j$. Let $k$ be the smallest of those indexes: by proceeding as above we can conclude that $\psi_1\U\psi_2 \in M_i$.
	\end{compactitem}
	
	\item Finally, the case when $\varphi = \psi_1\R\psi_2$ is similar to the previous one and can be skipped.
\end{itemize}

\noindent This concludes the proof of property~\eqref{eq:strongerprop}. By definition of $\Init$, $\varphi\in M_1$, and thus, by~\eqref{eq:strongerprop}, we have proved that $\alpha \mmodels \varphi$, as required.
\end{proof}

\section{Model checking \hyltl}

In the previous section we have shown how to build an GBHA that is equivalent to a \hyltl formula. In this section we show how this can be exploited to solve the model checking problem for \hyltl. 

Let $\autH_S$ be a hybrid automaton representing the system under verification, and let $\varphi$ be the \hyltl formula representing a property that the system should respect. Consider the GBHA $\autH_{\neg\varphi}$ that is equivalent to \emph{the negation of the property}: by Theorem~\ref{teo:main}, it accepts all the hybrid traces that \emph{violates} the property we want to verify. Now, if we compose the automaton for the system with the automaton for $\neg\varphi$ we obtain a GBHA $\autH_S \| \autH_{\neg\varphi}$ such that:
\begin{itemize}
	\item it generates only hybrid traces that are generated by $\autH_S$;
	\item it accepts only hybrid traces that are accepted by $\autH_{\neg\varphi}$.
\end{itemize}

\noindent Hence, $\autH_S \| \autH_{\neg\varphi}$ accepts only those hybrid traces that are generated by the system and violates the property. This means that $\autH_S$ respects the property $\varphi$  if and only $\autH_S \| \autH_{\neg\varphi}$ \emph{does not accept any hybrid trace}.

This last property of $\autH_S \| \autH_{\neg\varphi}$ is a reachability property that can be tested by existing tools for the reachability analysis of hybrid automata, and thus allows to use existing technology to verify properties of \hyltl. The only thing that one need to do is to write a procedure implementing the construction of Definition~\ref{def:gbha-varphi} to build the automaton for the negation of the formula, and then send the results to the reachability analysis tool.

Feasibility of this approach has been tested by verifying the Thermostat example of Section~\ref{sec:ha} against the example formula $\varphi_{hyb} = \neg \F \left( x \geq 21 \land \X \on\right)$ of Section~\ref{sec:hyltl}, using the well-known software package PhaVer~\cite{Frehse2008}. 
As a first step, it is necessary to build the automaton for $\neg\varphi_{hyb} = \F \left( x \geq 21\land \X\on \right) = \top \U \left( x \geq 21 \land \X \on \right)$. The application of Definition~\ref{def:gbha-varphi} leads to the GBHA depicted in Figure~\ref{fig:gbha-neg}, where initial locations have a thick border and final locations are grayed out. In this particular case there is only one until formula in the closure, so $\cvF$ is a singleton set.
To simplify the picture, some of the formulas labeling the locations have been left out.

\begin{figure}[htbp]
\resizebox{\textwidth}{!}{\input{newHphi}}
\caption{The GBHA for $\neg\varphi_{hyb}$.}
\label{fig:gbha-neg}
\end{figure}

Since PhaVer allows the composition of hybrid automata, it was not necessary to compute the parallel composition of $\autH_T$ and $\autH_{\neg\varphi_{hyb}}$. However, PhaVer does not allow to specify acceptance conditions on hybrid automata. Hence, to search for an accepting run we have to modify the automaton for the formula as follows:
\begin{itemize}
	\item we assigned to every final state a unique numerical value different from $0$;
	\item we added two auxiliary variables, $f$ and $y$, with initial value $0$;
	\item the derivative of $f$ and $y$ are equal to $0$ in all locations;
	\item when a transition exiting a final state is taken, and the current value of $f$ is $0$,
		$f$ can be non-deterministically reset to the value associated to the final state. At the same time, the current value of $x$ is stored in $y$;
	\item all other transitions leave the values of $f$ and $x$ unchanged.
\end{itemize}

\noindent In this way we exploit the non-determinism to guess the occurrence of a final location that will occur infinitely often in the accepting run of the automaton, and we can check its existence within PhaVer: it is sufficient to compute the set of reachable states of $\autH_T \| \autH_{\neg\varphi_{hyb}}$ and check whether there exists a state where the location of $\autH_{\neg\varphi_{hyb}}$ is a final one, the value of $f$ is equal to the numerical value of that location, and $y = x$. If this is not the case, then there are no accepting runs and 
the property is verified. If, on the other hand, such a state can be found, it may be the case that a loop from a final location to a final location can be built. Notice that in the latter case we cannot conclude that the system falsify the required property: PhaVer computes an \emph{over-approximation} of the exact reachable set of the system. Hence, it may be the case that no accepting run exists, but the test found a loop from a final location to a final due to over-approximation errors.

In the example given in this paper, PhaVer was able to correctly verify that the thermostat respect the property $\varphi_{hyb}$. In the case of this very simple example, the computation time was almost instantaneous (0.2 seconds on an Intel Core Duo 2.0 GHz notebook).

\section{Conclusion}

In this paper we presented \hyltl, a logic that is able to express properties of hybrid traces, and we have shown how to translate a formula of \hyltl to an equivalent hybrid automaton with B\"uchi acceptance conditions.
In this way it is possible to solve the model checking for \hyltl by reducing it to a reachability problem on the composition of the hybrid automata representing the system with the one representing the (negation of the) formula.
Feasibility of the approach has been tested by verifying a very simple example using the well-known tool PhaVer.

This is still a preliminary work, that can be extended in many directions. The expressivity of the logic can be extended by adding jump predicates to the language, to express properties on the reset functions of the system. 
The algorithm for computing the equivalent automaton can be improved by using an on-the-fly approach like the one used in~\cite{Gerth95} for \ltl. Finally, tool support for the logic can be extended by customizing existing reachability analysis tools to manage final locations directly, without the need to introduce additional variables to the model.

\bibliographystyle{eptcs}
\bibliography{has2013}
\end{document}